%File: formatting-instructions-latex-2023.tex
%release 2023.0
\documentclass[letterpaper]{article} % DO NOT CHANGE THIS
\usepackage{aaai23}  % DO NOT CHANGE THIS
\usepackage{times}  % DO NOT CHANGE THIS
\usepackage{helvet}  % DO NOT CHANGE THIS
\usepackage{courier}  % DO NOT CHANGE THIS
\usepackage[hyphens]{url}  % DO NOT CHANGE THIS
\usepackage{graphicx} % DO NOT CHANGE THIS
\urlstyle{rm} % DO NOT CHANGE THIS
  % DO NOT CHANGE THIS
\usepackage{natbib}  % DO NOT CHANGE THIS AND DO NOT ADD ANY OPTIONS TO IT
\usepackage{caption} % DO NOT CHANGE THIS AND DO NOT ADD ANY OPTIONS TO IT
\frenchspacing  % DO NOT CHANGE THIS
\setlength{\pdfpagewidth}{8.5in}  % DO NOT CHANGE THIS
\setlength{\pdfpageheight}{11in}  % DO NOT CHANGE THIS
%
% These are recommended to typeset algorithms but not required. See the subsubsection on algorithms. Remove them if you don't have algorithms in your paper.
\usepackage[boxed]{algorithm}
\usepackage[noend]{algorithmic}
\usepackage{amsthm,amsmath,amssymb,amsfonts}
\usepackage[utf8]{inputenc}
\usepackage{tikz}
\usepackage{mathtools}
\usepackage{todonotes}
\usepackage{csquotes}
\usepackage{booktabs}
\usepackage{arydshln}
\usepackage{wrapfig}
\usepackage{balance}
\usepackage{curves}
\usepackage{wrapfig}
\usepackage{fancybox}
\usepackage{pdfpages}
\usepackage[export]{adjustbox}
\usepackage[usestackEOL]{stackengine}
\usepackage[flushleft]{threeparttable}
\usepackage{pgfplots}
\usepackage{pgfpages}
\usepackage{pifont}
\usepackage{tablefootnote}
\usepackage{multirow}
\usepackage{comment}

\definecolor{MyGreen}{rgb}{0, 0.7, 0}
\definecolor{MyRed}{rgb}{0.8, 0, 0}
\newcommand{\cmark}{{\color{MyGreen}\ding{51}}}
\newcommand{\xmark}{{\color{MyRed}\ding{55}}}

\newtheorem{definition}{Definition}
\newtheorem{example}{Example}

\newtheorem{theorem}{Theorem}

%
% These are are recommended to typeset listings but not required. See the subsubsection on listing. Remove this block if you don't have listings in your paper.
\usepackage{newfloat}
\usepackage{listings}
\DeclareCaptionStyle{ruled}{labelfont=normalfont,labelsep=colon,strut=off} % DO NOT CHANGE THIS
\lstset{%
	basicstyle={\footnotesize\ttfamily},% footnotesize acceptable for monospace
	numbers=left,numberstyle=\footnotesize,xleftmargin=2em,% show line numbers, remove this entire line if you don't want the numbers.
	aboveskip=0pt,belowskip=0pt,%
	showstringspaces=false,tabsize=2,breaklines=true}
\floatstyle{ruled}
\newfloat{listing}{tb}{lst}{}
\floatname{listing}{Listing}
%
% Keep the \pdfinfo as shown here. There's no need
% for you to add the /Title and /Author tags.
\pdfinfo{
/TemplateVersion (2023.1)
}

\setcounter{secnumdepth}{0} %May be changed to 1 or 2 if section numbers are desired.

% The file aaai23.sty is the style file for AAAI Press
% proceedings, working notes, and technical reports.
%

% Title

% Your title must be in mixed case, not sentence case.
% That means all verbs (including short verbs like be, is, using,and go),
% nouns, adverbs, adjectives should be capitalized, including both words in hyphenated terms, while
% articles, conjunctions, and prepositions are lower case unless they
% directly follow a colon or long dash
\title{Daycare Matching in Japan: Transfers and Siblings}

\author{
    %Authors
    % All authors must be in the same font size and format.
    Zhaohong Sun\textsuperscript{\rm 1}, Yoshihiro Takenami\textsuperscript{\rm 1}, Daisuke Moriwaki\textsuperscript{\rm 1}, Yoji Tomita\textsuperscript{\rm 1},
    Makoto Yokoo \textsuperscript{\rm 2}
}
\affiliations{
    %Afiliations
    \textsuperscript{\rm 1}AI Lab, CyberAgent Inc, 
    Japan\\
    \textsuperscript{\rm 2}Kyushu University, Japan\\
}

\begin{document}

\maketitle
\begin{abstract}
In this paper, we study a daycare matching problem in Japan and report the design and implementation of a new centralized algorithm, which is going to be deployed in one municipality in the Tokyo metropolis. There are two features that make this market different from the classical hospital-doctor matching problem: i) some children are initially enrolled and prefer to be transferred to other daycare centers; ii) one family may be associated with two or more children and is allowed to submit preferences over combinations of daycare centers. We revisit some well-studied properties including individual rationality, non-wastefulness, as well as stability, and generalize them to this new setting. We design an algorithm based on integer programming (IP) that captures these properties and conduct experiments on five real-life data sets provided by three municipalities. Experimental results show that i) our algorithm performs at least as well as currently used methods in terms of numbers of matched children and blocking coalition; ii) we can find a stable outcome for all instances, 
although the existence of such an outcome is not guaranteed in theory.
% there always exists a stable outcome, which contrasts non-existence of stable outcomes in theory. 
\end{abstract}

\section{Introduction}

% background and significance
Due to a high demand for daycare services and an increasing number of mothers in the workforce, a large number of children are placed on a waiting list each year in Japan, especially in the metropolitan area. The shortage of daycare facilities has become one of the most serious social issues. The Japanese Government provides daycare centers with subsidies to relieve the financial burden of early childhood education. Currently, kindergartens and daycare centers are free of charge for children of age from $3$ to $5$ and fees are deducted for children younger than $3$ years old.
% ~\footnote{\url{https://www8.cao.go.jp/shoushi/shinseido/musyouka/about/en/index.html}}. 
Although the national number of children on the waiting list dropped dramatically to $5634$ in April $2021$ from the peak over $26000$ a few years ago, the government is still concerned about the situation after the COVID-19 pandemic.
% ~\footnote{\url{https://www.mhlw.go.jp/content/11922000/000840529.pdf}}.

% a high-level description of the market
The allocation of children to daycare centers in Japan is not based on a first-come, first-served basis. Families with several children first report to their local municipality their preferences over combinations of daycare centers and the municipality then generates a priority ordering over children based on its own scoring system. Children from low-income households and single-parent families, as well as those whose guardians are suffering from diseases or disabilities, usually take precedence over others. Nearly all slots are assigned at the beginning of April each year via a centralized algorithm that takes into consideration of both families' preferences and priorities over children.

% connection with existing literature
The existence of siblings makes this problem resemble \emph{hospital-doctor matching with couples} in which a couple of doctors participates in the job market and submits a joint preference over pairs of hospitals \citep{Roth84a}. Two obvious distinctions are that some children may be initially enrolled and one family may have more than two children. A recent work studies school choice in Chile where both transfers and siblings are considered~\citep{CEE+22a}. However, they assume families have restrictive \emph{higher-first} preferences (i.e. families prioritize the assignment of their children in higher grades), while in the Japanese daycare matching problem families can submit their preferences over any possible combination of daycare centers.

% summary of our work
We are actively collaborating with several municipalities and our objective is to design transparent matching algorithms that compute desirable outcomes efficiently. 
\emph{A fundamental question is which properties are deemed suitable and appropriate for this setting.} 
Stability is a standard solution concept for two-sided matching problems and has been widely applied in practice~\citep{Roth08a}. It is often decomposed into individual rationality, non-wastefulness, and fairness in the literature on school choice~\citep{AbSo03b}. We carefully generalize these properties to meet the expectations of municipalities and design a centralized algorithm to achieve them.
% , which will be deployed in the daycare matching system of Tama city.

% contribution
We summarize our contributions below:
\begin{itemize}
 \item Our first contribution is to formalize a realistic model of the daycare matching market in Japan based on real-life data sets. To our best knowledge, we are the first to study a two-sided matching problem in which i) children may have 
 their initial enrollment and ii) two or more siblings can submit a joint preference over any possible combination of daycare centers. 
 We compare our model with several representative papers in Table~\ref{table:comparison} and discuss more papers in the section on related work.
 \item We generalize some well-studied concepts including individual rationality, non-wastefulness, and stability by taking the initial enrollment and siblings' joint preferences into account. 
 We adapt these concepts based on the requirements of municipalities; children's welfare is considered more important than daycares'. One subtle difference from previous definitions is that children can make use of other siblings' assignments when forming a blocking coalition.
 \item We develop and implement an algorithm based on integer programming (IP) to capture these new properties, which will be deployed in one city in the Tokyo metropolis\footnote{Our trial matching system is currently being tested and verified by one municipality which is satisfied with our new algorithm.}. 
 We evaluate the performance of our new algorithm through experiments on several real-life data sets. Experimental results show that the outcomes returned by our algorithm are at least as good as the ones yielded by the currently applied methods (some of which are computed by an undisclosed and commercial software package\footnote{\url{https://www.fujitsu.com/global/about/resources/news/press-releases/2018/1112-01.html}}). Another surprising result is that we can find a stable outcome for all instances, although the existence of such an outcome is not guaranteed in theory.
 \end{itemize} 

\begin{table}[tb]
\centering
\resizebox{\columnwidth}{!}{
\begin{tabular}{ccccc}
 \hline
   & transfer & \multicolumn{2}{c}{siblings} & general  
   \\
   &  & $2$ & $\geq 2$ & preferences 
 \\
  \hline
 \citep{BMM14a} & \xmark & \cmark & & \cmark
 \\
 \hline
  \citep{MMT17a} & \xmark & \cmark & & \cmark  
 \\
  \hline
 \citep{HHKS+17a} & \cmark & \xmark & & --
 \\
 \hline
 \citep{Okum19a} & \xmark & \xmark & & --
 \\
 \hline  
 \citep{KaKo19a} & \xmark & \xmark & & --
 \\
 \hline  
  \citep{DMP22a} & \xmark & \cmark & & \xmark   
 \\
 \hline  
  \citep{CEE+22a} & \cmark &  & \cmark & \xmark 
 \\
 \hline  
  Our work & \cmark &  & \cmark & \cmark 
 \\
 \hline  
\end{tabular}
}
\caption{Comparison with some existing models.}
\label{table:comparison}
\end{table}

%%%%%%%%%%%%%%%%%%%%%%%%%%%%%%%%%%%%%%%%%%%%%%%%
% Related Work
%%%%%%%%%%%%%%%%%%%%%%%%%%%%%%%%%%%%%%%%%%%%%%%%

\section{Related Work}

% Japanese daycare matching

There exist two existing works on the Japanese daycare matching problem, one of which focused on non-wasteful outcomes and the other one considered fair outcomes. \citet{Okum19a} assumes that schools establish soft quotas for each grade rather than hard bounds. If the soft target for some grade is not fulfilled, then resources (e.g. room space, teachers) assigned to that grade can be utilized by other grades. \citet{Okum19a} proposed a mechanism based on the deferred acceptance that satisfies non-wastefulness, fairness with respect to initial quotas, and fairness for children of the same grade. 
\citet{KaKo19a} studied a matching model under general upper-bounds (i.e., any subset of a feasible set of students remains feasible) that includes the Japanese daycare matching market. They proposed a student-optimal fair matching (SOFM) algorithm which is the most preferred outcome by every student among all feasible, individually rational, and fair matching. 
Although \citep{Okum19a} and \citep{KaKo19a} considered the same research problem, they focus on flexible quotas and do not allow for families' preferences and initial enrollment. 
\citep{DMP22a} consider a school choice model with families where each family has at most two children. They assume that siblings must have the same preferences over daycare centers and siblings are not separable (otherwise they are treated as different families). \citep{CEE+22a} studies school choice in Chile and their setting is the most similar to ours. However, they assume that families have restrictive \emph{higher-first} preferences instead of general preferences over tuples of daycare centers. 

There exists some works on daycare matching which are different from our focus. 
\citep{KMT14a} study a dynamic daycare matching problem and children have individual preferences.
\citep{VBPL17a} study a daycare matching problem in Estonia and they define fairness (or equal access) as the chance for families to be matched to their first choice daycares.
\citep{GJR+21a} propose an integer programming approach for kindergarten allocation in Norway and the objective of their IP models is to minimize the sum of penalties for not assigning children to their prioritized kindergartens. 
\citep{RKG21a} propose and implement a decentralized algorithm
based on deferred acceptance for childcare allocation in two German cities.

% matching with couples
There exists a rich literature on hospital-doctor matching with couples which bears similarities to the daycare matching problem. Even though a stable outcome is not guaranteed to exist in theory, \citet{KPR13a} showed that there is a high probability that a stable matching may exist in a large market if the proportion of couples is relatively small. \citet{BMM14a,MMT17a} focused on the computational complexity of computing a stable outcome in the setting of couples (which is NP-hard to decide whether one exists) and proposed algorithms based on Integer Programming and Constraint Programming. \citet{NgVo18a} discussed near-feasible stable matchings for couples under the assumption that hospital capacity is soft. 
% They showed that there always exists a stable matching when modifying the initial capacities of the hospitals. 
% There are three obvious differences between these work and ours. Firstly, in our setting, each family may be associated with two or more children, which makes the joint preferences of children more complicated. Secondly, they do not consider initial enrollment of children.

% initial enrollment
\citet{HHKS+17a} considered a school choice problem in which each school is subject to a minimum quota and each student has an initial school and prefers to transfer. They showed that no strategy-proof mechanism exists that is both efficient and fair when minimum quotas exist. They proposed one mechanism called Top Trading Cycles among Representatives with Supplementary Seats (TTCR-SS) which is Pareto efficient and another mechanism called Priority List-based Deferred Acceptance with Minimum Quotas (PLDA-MQ) which is priority list-based stable. \citet{STY18a} also considered the school choice problem with initial schools where distributional constraints can be represented as an M-convex set. They proposed a mechanism based on the Top Trading Cycles mechanism, which satisfies strategy-proofness, feasibility, Pareto efficiency, and individual rationality. However, both works did not consider siblings' joint preferences over schools.

% Refugee Reallocation
The problem of refugee reallocation also deals with families instead of individuals. In the model of refugee reallocation, each refugee family consists of several members and imposes a multi-dimensional requirement of services including housing, medical services, education, employment, and so on \citep{DKT16a,ACGS18a}. A host locality can accommodate a refugee family only if it meets the multi-dimensional requirement. Each refugee family has a strict preference ordering over host localities and each host locality has a strict preference ordering over refugee families. Thus members from the same refugee family are inseparable and they have an identical preference ordering. However, this is different from our model in which siblings from the same family can be matched in separable daycares and they submit a joint family preference ordering over a tuple of daycares.

%%%%%%%%%%%%%%%%%%%%%%%%%%%%%%%%%%%%%%%%%%%%%%%%
% Model
%%%%%%%%%%%%%%%%%%%%%%%%%%%%%%%%%%%%%%%%%%%%%%%%

\section{Model}

In this section, we introduce the model of daycare matching in Japan based on real-life data sets. 
% It is different from previous literature on Japanese daycare matching \citep{Okum19a,KaKo19a} in which they focused on \emph{flexible quotas} and overlooked two realistic features including transfers and siblings. 
% 
% By contrast, we consider these two basic features as well as \emph{hard quotas} which are currently utilized in the centralized matching system. 
%
% We discuss flexible quotas in the end of this paper.
%
An instance $I$ consists of a tuple $I = (C, F, D, G, \omega, \succ_C, \succ_F, \succ_D, q)$. 
There are a set of children $C$, a set of families $F$, and a set of daycares $D$. 
Each child $c$ pertains to one family $f \in F$ and each family $f$ is associated with a set of children $C_f \subseteq C$.

The set of children $C$ is partitioned into disjoint families s.t. i) $\bigcup_{f\in F} C_f = C$; ii) for any two different families $f$ and $f'$, $C_{f} \cap C_{f'} = \emptyset$. 
When $C_f$ consists of one child, i.e. $C_f = \{c\}$, we say child $c$ is an \emph{only child} of family $f$. When $C_f$ consists of several children, i.e. $C_f = \{c_1, \ldots, c_k\}$ with $k \geq 2$, we say children $c_1, \ldots, c_k$ are \emph{siblings} of family $f$. 
Each child $c$ belongs to one of the grades $G$ and let $G(c)$ denote the grade of child $c$. Note that siblings from the same family may belong to the same grade, e.g., twins. 
% The existence of twins makes this problem resembles \emph{matching with couples} \citet{KPR13a,BIS11a}, while our model is more complicated as one family may have more than two children.

The set of daycares is denoted as $D$. 
Let $d_0 \in D$ denote a dummy daycare representing the option of being unmatched for children.
Some children are initially enrolled at some daycare $d\in D\setminus \{d_0\}$ and prefer to be transferred to a new one. Let $\omega(c) \in D$ denote the initial daycare of child $c$. If $\omega(c) = d_0$, then child $c$ is not initially enrolled (i.e. a new applicant). We generalize the notion of $\omega(c)$ to $\omega(f)$ as follows: given a family $f$ with children $C_f = \{c_1, \ldots, c_k\}$, let $\omega(f) = (\omega(c_1), \ldots, \omega(c_k))$ denote the initial enrollment of each child from $C_f$.
 
% \subsection{Preferences of Children and Families}

There are two types of preference orderings collected in the centralized matching system. Each child $c$ has a strict \emph{individual preference ordering} $\succ_c$ over daycares $D$. We say daycare $d$ is \emph{acceptable} to child $c$ if either it is strictly better than child $c$'s initial enrollment, i.e., $d$ $\succ_c$ $\omega(c)$ or it is the same as child $c$'s initial enrollment, i.e., $d$ $=$ $\omega(c)$. Only acceptable daycares are listed in $\succ_c$ and the dummy daycare $d_0$ is omitted if child $c$ does not have initial enrollment.
% , i.e., $\omega(c)=d_0$.

Each family $f$ with $k$ children $C_f = \{c_1, \ldots, c_k\}$ where $k \geq 2$ also has a strict \emph{family preference ordering} $\succ_f$ over a tuple of $k$ daycare centers $D^k$, i.e., a joint preference ordering of children $C_f$. Here a tuple of daycares $(d_1, \ldots, d_k) \in D^k$ means that for each integer $i \in \{1, \ldots, k\}$, the $i$-th child $c_i \in C_f$ attends the $i$-th daycare in the tuple. 

During the application phase, if one family's preference type falls into one of predefined categories (e.g., all children must be assigned to the same daycare), then the family just submits its children's individual preference orderings and specifies its preference type. Then the system automatically generates a family preference ordering.
%\footnote{\url{https://www.city.shibuya.tokyo.jp/assets/kodomo/000058649.pdf}}.
Otherwise, the family needs to fill out its complete preference ordering on its own, which is common in practice. 
% Due to space limitation, we describe seven predefined preference types provided by the current system in Appendix.
 % Example~\ref{example:family_preferences}.
% We next give the justification of why we present both individual and family preference orderings in stead of family preferences only. As family preferences play a critical role in determining the outcome, the method of generating family preferences makes a considerable influence. 
% If some family is not willing to write down a long preference ordering and choose one predefined category, then it may do harm to the social welfare. This leaves us the question of how to regulating families' preferences and how that will affect the final outcome. 
% % Secondly, although we argue that fairness as well as other properties should take family preferences into account, we are still interested in how well the algorithm performs in terms of envy w.r.t. individual preferences. 

% Besides children's individual preference orderings and families preference orderings, 

% \subsection{Common Priority Ordering and Priorities of Daycares}
In Japan, each municipality has its own scoring system to calculate children's \emph{priority scores}.
 % which are used to determine a strict \emph{priority ordering} over children. 
Factors depend on families' circumstances, such as parents' employment and health status. Siblings from the same family usually have an identical priority score, although in some cases one sibling may be given additional scores due to some characteristics, e.g., disability. 
In order to derive a strict priority ordering over all children, some small fractions are added for tie-breaking (based on some complex rules). 

% \zh{common priority ordering $\rhd$ is deleted, as we don't consider it in this work.}
% We use $\rhd$ to denote a strict \emph{common priority ordering over children} $C$ based on children's priority scores. For instance, we denote by $c_1 \rhd c_2$ if child $c_1$ has a higher priority score than child $c_2$. 
% 
% The common priority ordering over children $\rhd$ is currently used in some daycare matching programs, e.g., Tama city. 
%
% We can derive a \emph{strict common priority ordering over families} $F$ from children's priority scores as well, denoted by $\rhd^*$. For instance, assign each family $f$ a family priority score based on the average priority score of children $C_f$. 
% %
% We propose the common priority ordering over families $\rhd^*$ which is critical for our new algorithms.
% Note that such a common priority ordering has been widely considered in the mechanism design of matching markets~\citep{GKH+15a,GKKT+17a}.

Each daycare also derives a strict priority ordering from children's priority scores. However, daycares may give additional points to some children. For example, if a child has a sibling who is currently enrolled at some daycare, then the child receives extra points for that daycare. 
Note that this is different from the case in other countries where a child is given the highest daycare priority if the child has an enrolled sibling at that daycare \citep{CEE+22a,DMP22a}. 
Another important feature is that if some child $c$ is initially enrolled at daycare $d$, then child $c$ has higher daycare priority at daycare $d$ than those who are not.
Each daycare $d$ has a strict preference ordering $\succ_d$ over $C \cup \{\emptyset\}$, where $\emptyset$ represents the option of leaving some seats vacant. 
% \subsection{Feasibility}

The Japanese government imposes two types of feasibility regulations: grade-specific minimum room space for each child and grade-specific maximum teacher-child ratios for each teacher. In the current matching system, each daycare $d$ sets a \emph{hard quota} $q_d^g$ for each grade $g$ that conforms to the two feasibility regulations in advance. Note that the target quotas are designed for new applicants only. If some initially enrolled child transfers to a new daycare, then one additional seat becomes available. 

For the rest of this paper, we just treat each grade $g$ at each daycare $d$ as an independent daycare that only admits children of grade $g$. This simplifies our description of notation and terminology. 
We assume each daycare $d\in D$ has its capacity limit $q_d$. 
We will discuss \emph{flexible quotas} that comply with these two regulations at the end of this paper.

% \subsection{Matching}

An outcome $\mu$ is a matching between $C$ and $D$ s.t. i) each child $c$ is matched to exactly one daycare $d$
(which can be $d_0$), i.e., $|\mu(c)| = 1$, where $\mu(c)$ denotes the daycare to which $c$ is matched, 
and ii) $\mu(c) = d$ if and only if $c \in \mu(d)$, where $\mu(d)$ denotes the set of children matched to $d$. 
An outcome is feasible if for each daycare $d$, $|\mu(d)| \leq q_d$ holds.
Given outcome $\mu$ and family $f$ with children $C_f= \{c_1, \ldots, c_k\}$, let $\mu(f) = (\mu(c_1), \ldots, \mu(c_k))$ denote the assignment of family $f$ in the outcome $\mu$.
\begin{comment}
An outcome $\mu$ is a matching between $C$ and $D$ s.t. i) each child $c$ is matched to at most one daycare $d$, i.e., $|\mu(c)| \leq 1$ and ii) $\mu(d) = c$ if and only if $c \in \mu(d)$. An outcome is feasible if for each daycare $d$, $|\mu(d)| \leq q_d$ holds.
Given outcome $\mu$ and family $f$ with children $C_f= \{c_1, \ldots, c_k\}$, let $\mu(f) = (\mu(c_1), \ldots, \mu(c_k))$ denote the assignment of family $f$ in the outcome $\mu$.
\end{comment}

%%%%%%%%%%%%%%%%%%%%%%%%%%%%%%%%%%%%%%%%%%%%%%%%
% Fundamental Properties
%%%%%%%%%%%%%%%%%%%%%%%%%%%%%%%%%%%%%%%%%%%%%%%%

\section{Fundamental Properties}

% Recall that \emph{stability} is the standard solution concept for two-sided matching problems \citep{GaSh62a,Roth85a} and it is often decomposed into individual rationality, non-wastefulness and fairness in the literature of school choice \citep{AbSo03b,EHYY14a}. As shown in the previous paper, there may not exist a stable outcome for matching with couples \citep{KPR13a} or school choice with siblings \citep{DMP22a}. 

We next introduce two fundamental properties by generalizing individual rationality and non-wastefulness. Two municipalities confirmed that they consider that any desirable outcome should satisfy these two new properties.
We discuss how to properly define stability in the next section.

% \subsection{Family Rationality}

\emph{Individual rationality} requires that each child $c$ is matched to some daycare that is weakly better than child $c$'s initial enrollment. Family rationality captures the same idea except that we consider families' preferences instead of children's individual preferences. 
\begin{definition}[Family Rationality]
\label{def:family_rationality}
A feasible outcome $\mu$ satisfies family rationality if for each family $f$, either $\mu(f)$ $\succ_f$ $\omega(f)$ or $\mu(f)$ $=$ $\omega(f)$ holds. 
\end{definition}
%
% Definition~\ref{def:family_rationality} states that a feasible outcome $\mu$ satisfies family rationality if each family $f$ is matched to a tuple of daycares that are weakly better than the initial enrollment of each child.  
%
% Note that individual rationality does not imply family rationality, as shown in Example~\ref{example:family_rationality}.
% \begin{example}
% \label{example:family_rationality}
% Consider one family $f$ with two children $\{c_1, c_2\}$ and two daycares $D=\{d_1, d_2\}$. Family $f$
%  requires that both children must be enrolled at the same daycare simultaneously. The preference orderings of each child and family $f$ are as follows.
%  %
%  \begin{align*}
%  & \succ_{c}: d_1, d_2 \quad \text{for both $c_1$ and $c_2$} 
%  \\
% & \succ_f: (d_1, d_1), (d_2, d_2)
% \end{align*}
% %
% Outcome $\mu=\{(c_1, d_1), (c_2, d_2)\}$ in which each child is matched to an acceptable daycare is individually rational. However, outcome $\mu$ does not satisfy family rationality, because the tuple of daycares $(d_1, d_2)$ is not listed in the family preference ordering $\succ_f$.
% \end{example}

% \subsection{Family Non-wastefulness}

For the classical matching problem without siblings, non-wastefulness requires that 
if a child is not matched to a more preferred daycare, 
then the daycare does not have any vacant seats. 
\citet{CEE+22a} generalize this idea to the setting with multiple siblings; if there exists a pair of daycare $d$ and child $c$ from family $f$ s.t. family $f$ prefers child $c$ to be matched with $d$, 
then daycare $d$ must be full. 
The same idea is also included in the stability concepts for matching models with couples or siblings~\citep{KlKl05a,Klkl07a,McMa10a,DMP22a}.
% \citep{KlKl05a,Klkl07a,McMa10a,KPR13a,BMM14a,PDB16a,ABM16a,MMT17a,NgVo18a,DMP22a}.

% However, we argue that this generalization may not be adequate for the daycare matching problem in which the welfare of one side of the market (families) is more important. This is different from the setting of hospital-doctor matching with couples where preferences of both doctors and hospitals are equally important. 
% We next illustrate our concern through Example~\ref{example:issue:NW}.
However, we argue that this generalization may not be adequate. In principle, we believe non-wastefulness should imply that improving the assignment of a family is not possible without hurting other children/families, i.e., the welfare of families must be the first priority. 
We illustrate our concern through Example~\ref{example:issue:NW}.

\begin{example}
\label{example:issue:NW}
Consider one family $f$ with two children $c_1$ and $c_2$ and two daycares $d_1$ and $d_2$ with one seat each. Both children are acceptable to both daycares. Family $f$ have preferences $(d_1, d_2)$ $\succ_f$ $(d_2, d_1)$. 
Outcome $\mu$ $=$ $\{(c_1, d_2)$, $(c_2, d_1)\}$ is considered non-wasteful by \citet{CEE+22a}, but we can obtain another outcome $\mu'$ $=$ $\{(c_1, d_1)$, $(c_2, d_2)\}$ without making any child or family worse off, and it is more preferred by family $f$.
\end{example}

Instead, we propose a new family non-wastefulness concept in Definition~\ref{def:non-wastefulness} that captures the above mentioned idea.
%each family cannot be matched to a better tuple of daycares without changing the assignment of other families. The difference from the previous concept is that children are allowed to make use of other siblings' assignment. 
% \zhnew{
% As we will explain in the experimental section, all of the actual outcomes from four real-life data sets satisfy family non-wastefulness.
% }

\begin{definition}[Family Non-wastefulness]
\label{def:non-wastefulness}
A feasible outcome $\mu$ satisfies family non-wastefulness, if there does not exist another feasible outcome $\mu'$ and one family $f$ such that i) $\mu'(f) \succ_f \mu(f)$ and ii) for each family $f' \in F \setminus \{f\}$, we have $\mu'(f') = \mu(f')$.
\end{definition}
Definition~\ref{def:non-wastefulness} states that a feasible outcome $\mu$ is family non-wasteful if we cannot find another feasible outcome $\mu'$ and family $f$ such that i) children $C_f$ from family $f$ can be matched to a more preferred tuple of daycares in outcome $\mu'$ and ii) all other families $F \setminus \{f\}$ are matched to the same tuple of daycares in $\mu'$ as in the outcome $\mu$. When each family $f$ has an only child, Definition~\ref{def:non-wastefulness} becomes equivalent to the original concept of non-wastefulness. 
% \citep{BaSo99a}.

%%%%%%%%%%%%%%%%%%%%%%%%%%%%%%%%%%%%%%%%%%%%%%%%
% Stability
%%%%%%%%%%%%%%%%%%%%%%%%%%%%%%%%%%%%%%%%%%%%%%%%

\section{Stability}
In this section, we generalize the original concept of stability to the setting of daycare matching with siblings and initial enrollment. Our new stability concept implies family non-wastefulness in  Definition~\ref{def:non-wastefulness} and it is different from the ones considered in the literature on matching with couples, as our concept allows a child to make use of her siblings' assignment when forming a blocking coalition.

 Recall that in the classical hospital-doctor matching problem, a doctor and a hospital form \emph{a blocking pair} if they are unmatched and prefer to be matched with each other. A feasible outcome is \emph{stable} if it is individually rational and there does not exist a blocking pair~\citep{Roth86a}.
In our setting, a family expresses preferences over a tuple of daycares. Thus we consider a \emph{blocking coalition} between one family and a set of daycares. Briefly speaking, we say a feasible outcome is \emph{stable} if there does not exist any blocking coalition. Note that we treat family rationality as a different property. As explained in the section on Integer Programming, we capture these properties by different constraints separately.

% \subsection{Notation}

We next introduce a notation $\lambda(\cdot)$ to simplify the definition of blocking coalition and stability. 

\begin{definition}[Function $\lambda$]
Given a feasible outcome $\mu$, a daycare $d$, and two subsets of children $C_1, C_2\subseteq C$, let $\lambda(\mu, d, C_1, C_2)$ denote the number of children who satisfy the following three conditions: 
i) they are matched to daycare $d$ in $\mu$, ii) they are not included in $C_1$ and iii) they have higher daycare priority than at least one child $c \in C_2$.
\end{definition}

We use notation $\lambda(\cdot)$ in the following way: given outcome $\mu$, daycare $d$, family $f$ and a subset of children $C^* \subseteq C_f$ who prefer to be matched to daycare $d$, let $\lambda(\mu, d, C_f, C^*)$ denote the number of children matched to daycare $d$ in $\mu$ excluding $C_f$
 who have higher daycare priority than at least one child $c\in C^*$.

\begin{example}
Consider one daycare $d$ and a set of children $C=\{c_1, c_2, c_3, c_4, c_5\}$ in which children $c_2$ and $c_4$ are siblings while others are only children. Suppose daycare $d$ has three seats and
a priority ordering $\succ_d: c_1, c_2, c_3, c_4, c_5$. 
For matching $\mu = \{(c_1, d), (c_3, d), (c_5, d)\}$, $C_1 = \{c_2, c_4\}$ and $C_2 = \{c_4\}$, we have $\lambda(\mu, d, C_1, C_2) = |\{c_1, c_3\}| = 2$. 
In other words, only two children excluding $C_1$ are matched to daycare $d$ who have higher daycare priority than any child in $C_2$ (i.e., child $c_4$) in the outcome $\mu$.
\end{example}

% \subsection{Intuition}
In order to develop a good intuition, we first confine our attention to the real-life data sets used in our experiments which deal with a setting where each family has at most one pair of twins and no more than three children. We then present a general concept of stability in Definition~\ref{def:stability}. 

We next illustrate four types of blocking coalition for this restrictive setting. Here are three reasons why we break down Definition~\ref{def:stability}. Firstly, it is easier to describe each concept and 
to capture each type of blocking coalition with a corresponding IP constraint. Secondly, if we could not find an outcome that fulfills all constraints, then we can relax them gradually until one exists. Thirdly, the choice of constraints involves a trade-off between the number of matched children and blocking coalitions, i.e., tolerating some form of blocking coalitions may lead to an increase in the number of matched children, as shown in our experimental results.

\begin{definition}[Blocking Coalition I]
 % - Only Child]
\label{def:blocking:only_child}
Given a feasible outcome $\mu$, family $f$ with an only child $c$ and daycare $d$ form a blocking coalition if 
\begin{itemize}
    % \item i) family $f$ prefers daycare $d$ to its assignment $\mu(f)$, i.e., $d \succ_{f} \mu(f)$;
    % \item ii) the number of children who are matched to daycare $d$ in the outcome $\mu$ and have higher priority than child $c$ is smaller than daycare $d$'s capacity, i.e., $\lambda(\mu, d, \{c\}, \{c\}) \leq q_d - 1$.
    \item i) $(d) \succ_{f} \mu(f)$;
    \item ii) $\lambda(\mu, d, \{c\}, \{c\}) \leq q_d - 1$.
\end{itemize}
\end{definition}

Definition~\ref{def:blocking:only_child} involves a family with an only child and a daycare, corresponding to the blocking pair in the original stability concept. Condition i) of Definition~\ref{def:blocking:only_child} states that family $f$ prefers daycare $d$ to its assignment $\mu(f)$. Condition ii) of Definition~\ref{def:blocking:only_child} states that child $c$ can be matched to daycare $d$ along with all matched children who have higher priority than child $c$, which contains two cases: either daycare $d$ has at least one vacant seat or child $c$ can replace some matched child $c'$ with lower daycare priority at daycare $d$. 

\begin{definition}[Blocking Coalition II]
 % - Same Daycare]
\label{def:blocking:twins}
Given a feasible outcome $\mu$, family $f$ with two children $C_f = \{c_1, c_2\}$ of the same age (i.e., twins) and daycare $d$ form a blocking coalition if 
\begin{itemize}
    % \item i) family $f$ prefers daycares $(d, d)$ to its assignment $\mu(f)$, i.e., $(d, d) \succ_{f} \mu(f)$;
    % \item ii) let child $c \in C_f$ denote the one with lower priority at daycare $d$, then the number of children (excluding $C_f$) who are matched to daycare $d$ in the outcome $\mu$ and have higher priority than child $c$ is smaller than daycare capacity minus $1$, i.e., $\lambda(\mu, d, C_f, C_f) \leq q_d - 2$.
    \item i) $(d, d) \succ_{f} \mu(f)$;
    \item ii) $\lambda(\mu, d, C_f, C_f) \leq q_d - 2$.
\end{itemize}
\end{definition}

Definition~\ref{def:blocking:twins} involves a family $f$ with a pair of twins $c_1$ and $c_2$ who apply to the same daycare $d$. Condition ii) of Definition~\ref{def:blocking:twins} states that daycare $d$ can admit children $C_f$ as well as all matched children who have higher priority than 
either child $c_1$ or $c_2$.

% \begin{definition}[\zhnew{Blocking Coalition III}]
% Given a feasible outcome $\mu$, family $f$ with two children $C_f = \{c_1, c_2\}$ and a tuple of daycares $(d_1, d_2)$ with $d_1 \neq d_2$
% form a \textbf{blocking coalition} if 
% \begin{itemize}
%   \item i) family $f$ prefers daycares $(d_1, d_2)$ to its assignment $\mu(f)$, i.e., $(d_1, d_2) \succ_{f} \mu(f)$;
%   \item ii) for both daycare $d_i \in \{d_1, d_2\}$, the number of children who are matched to daycare $d_i$ in the outcome $\mu$ and have higher priority than child $c_i \in C_f$ is smaller than daycare capacity, i.e., $\forall d_i \in (d_1, d_2), 
%   \lambda(\mu, d_i, c_i) < q_{d_i}$.
% \end{itemize}
% \end{definition}

\begin{definition}[Blocking Coalition III] 
% - Distinct Daycares]
\label{def:blocking:distinct}
Given a feasible outcome $\mu$, family $f$ with children $C_f = \{c_1, \ldots, c_k\}$, and a tuple of daycares $(d_1, \ldots, d_k)$ with $k$ distinct daycares, family $f$ and
daycares $d_1, \ldots, d_k$ form a blocking coalition if 
\begin{itemize}
    \item i) $(d_1, \ldots, d_k)$ $\succ_{f}$ $\mu(f)$;
    \item ii) $\forall d_i \in (d_1, \ldots, d_k)$, $\lambda(\mu, d_i, C_f, \{c_i\}) \leq q_{d_i} - 1$.
    % \item i) family $f$ prefers a tuple of daycares $(d_1, \ldots, d_k)$ to its assignment $\mu(f)$, i.e., 
    % $(d_1, \ldots, d_k)$ $\succ_{f}$ $\mu(f)$;
    % \item ii) for each daycare $d_i$ in the tuple $(d_1, \ldots, d_k)$, we have $\lambda(\mu, d_i, C_f, c_i) \leq q_{d_i} - 1$.
\end{itemize}
\end{definition}

Definition~\ref{def:blocking:distinct} involves a family $f$ with $k$ children $\{c_1$, $\ldots$, $c_k\}$ who apply to $k$ distinct daycares $(d_1$, $\ldots$, $d_k)$. Condition ii) of Definition~\ref{def:blocking:distinct} states that for each daycare $d_i$ from the tuple of daycares $(d_1$, $\ldots$, $d_k)$, daycare $d_i$ can admit the corresponding child $c_i$ from $C_f$ and all matched children (excluding other siblings from $C_f$) who have higher priority than child $c_i$.

\begin{definition}[Blocking Coalition IV]
% / mixed Daycares ?}]
\label{def:blocking:mixed}
Given a feasible outcome $\mu$, family $f$ with children $C_f = \{c_1, c_2, c_3\}$ and a tuple of daycares $(d_1, d_1, d_2)$ with $d_1 \neq d_2$\footnote{Here we assume if a family has three children including one pair of twins, 
$c_1$ and $c_2$ are twins and $c_3$ is another child.}
, family $f$ and daycares $d_1$ and $d_2$ form a blocking coalition if 
\begin{itemize}
    % \item i) family $f$ prefers a tuple of daycares $(d_1, d_1, d_2)$ to its assignment $\mu(f)$, i.e., $(d_1, d_1, d_2)$ $\succ_{f}$ $\mu(f)$;
    % \item ii) for daycare $d_1$, we have $\lambda(\mu, d_1, C_f, \{c_1, c_2\}) \leq q_{d_1} - 2$ and for daycare $d_2$, we have $\lambda(\mu, d_2, C_f, \{c_3\}) \leq q_{d_2} - 1$.
    \item i) $(d_1, d_1, d_2)$ $\succ_{f}$ $\mu(f)$;
    \item ii) For daycare $d_1$, $\lambda(\mu, d_1, C_f, \{c_1, c_2\}) \leq q_{d_1} - 2$ and for daycare $d_2$, $\lambda$$(\mu$, $d_2$, $C_f$, $\{c_3\})$ $\leq$ $q_{d_2}$ $-$ $1$.
\end{itemize}
\end{definition}
Definition~\ref{def:blocking:mixed} involves family $f$ with a pair of twins applying to the same daycare $d_1$ and another sibling applying to a different daycare $d_2$. Condition ii) of Definition~\ref{def:blocking:mixed} states that daycare $d_1$ can admit children $c_1$ and $c_2$ as well as all matched children (excluding $C_f$) who have higher priority than either $c_1$ or $c_2$, and daycare $d_2$ can admit child $c_3$ along with all matched children (excluding $C_f$) who have higher priority than child $c_3$.

%%%%%%%%%%%%%%%%%%%%%%%%%%%%%%%%%%%%%%%%%%%%%%%%%%%%%%%%%%%
% Old version with demand table
%%%%%%%%%%%%%%%%%%%%%%%%%%%%%%%%%%%%%%%%%%%%%%%%%%%%%%%%%%%

% \subsection{Formal Definition}

We next present the general concept of stability without imposing any limitation on the number of siblings or twins. Prior to that, we first introduce a new concept called demand table in Definition~\ref{def:demand_table}.

\begin{definition}[Demand Table]
\label{def:demand_table}
Given a family $f$ and a tuple of daycares $(d_1, \ldots, d_k)$ in family preference $\succ_f$,  \emph{demand table} $T$ $=$ $\{(d_1: C_1)$, $\ldots$, $(d_{k'}: C_{k'})\}$ is given as follows: for each distinct daycare $d$ in $(d_1, \ldots, d_k)$, add an entry in the form of $(d: \{c'_{1}, c'_{2}, \ldots\})$ where $c'_{1}, c'_{2}, \ldots$ are a subset of children from $C_f$ who apply to daycare $d$ w.r.t. the tuple $(d_1, \ldots, d_k)$. 
\end{definition}

\begin{example}[Instance of Demand Table]
\label{example:damand_table}
Consider one family $f$ with children $C_f = \{c_1, c_2, c_3\}$ and a tuple of daycares $(d_1, d_1, d_2)$ in the family preference ordering $\succ_f$. The corresponding demand table $T$ is $\{(d_1: \{c_1, c_2\}), (d_2: \{c_3\})\}$.
\end{example}

\begin{definition}[Stability]
\label{def:stability}
Given a feasible outcome $\mu$, family $f$ with children $C_f = \{c_1, \ldots, c_k\}$ and a tuple of daycares $(d_1, \ldots, d_k)$, family $f$ and all distinct daycares in the tuple $(d_1, \ldots, d_k)$ form a blocking coalition if 
\begin{itemize}
    \item i) $(d_1, \ldots, d_k)$ $\succ_{f}$ $\mu(f)$ and
    \item ii) let $T = \{(d_1: C_1), \ldots, (d_{k'}: C_{k'})\}$ denote the demand table w.r.t. $(d_1, \ldots, d_k)$, then for each pair of daycare $d$ and $C_d$ in $T$, we have 
    \[\lambda(\mu, d, C_f, C_d) \leq q_{d} - |C_d|.\]
\end{itemize}
A feasible outcome $\mu$ is stable if there does not exist any blocking coalition.
\end{definition}

% \subsection{Impossibility}

We next prove that there may not exist any stable outcome in general. However, in our experiments using real-life data sets, we found that a stable outcome exists for all instances.

\begin{theorem}
\label{theo:non-existence:stable}
The set of stable outcomes may be empty even if there are only four children and two daycares.
\end{theorem}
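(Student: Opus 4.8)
The plan is to prove the statement by exhibiting a single explicit instance with four children and two daycares for which \emph{every} feasible outcome admits a blocking coalition, and then verifying this by a finite case analysis. Since the source of instability here is the complementarity hidden in a sibling family's joint preference (the individual‑preference part alone behaves like an ordinary many‑to‑one market, for which a stable outcome always exists via deferred acceptance), the instance will contain exactly one sibling family together with two only children, so that the two siblings play the role of a ``couple'' whose desired combination conflicts cyclically with the two singletons.

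Concretely, I would take two daycares $d_1,d_2$ with $q_{d_1}=q_{d_2}=1$ and no initial enrollment, one family $S$ with siblings $\{s_1,s_2\}$, and two only children $x$ (family $X$) and $y$ (family $Y$). The family $S$ will most prefer the \emph{split} assignment $(d_1,d_2)$, with $(d_2,d_1)$ second, so that both daycares are consumed whenever the siblings are matched, while every partly unmatched tuple ranks strictly below these; the two only children get opposite favorites, $d_1\succ_x d_2$ and $d_2\succ_y d_1$. The daycare priorities are then tuned to close a cycle: at $d_1$ I take $s_2\succ x\succ s_1\succ y$, and at $d_2$ I take $s_1\succ y\succ s_2\succ x$. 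The effect is that $x$ outranks $s_1$ at $d_1$ but $s_2$ outranks $x$ at $d_1$, and symmetrically $y$ outranks $s_2$ at $d_2$ but $s_1$ outranks $y$ at $d_2$.

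The key steps are then: (1) reduce the search to ``full'' outcomes. If either daycare is left empty, its favorite among the only children ($x$ for $d_1$, $y$ for $d_2$) strictly prefers it to its current assignment and can claim the vacant seat, yielding a Blocking Coalition~I; hence any stable outcome must fill both seats. (2) Enumerate the finitely many full outcomes (one child at $d_1$, a distinct child at $d_2$) and exhibit a block for each. Two mechanisms suffice: when an only child outranks the sibling occupying its favorite daycare, that only child forms a Blocking Coalition~I; and when neither single can be displaced, the family $S$ rotates its two children into $(d_1,d_2)$ or $(d_2,d_1)$ through a Blocking Coalition~III, crucially using the feature that $\lambda(\mu,d,C_S,\cdot)$ \emph{excludes} the family's own members, so a sibling already sitting in a wanted seat never obstructs the family's own re‑coordination. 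Checking each case against conditions (i) and (ii) of the relevant definition shows the outcome is blocked, and the same verification is subsumed by the general stability definition via the corresponding demand table.

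The main obstacle is step (2): making the two priority lists mutually consistent so that the individual blocks and the sibling‑rotation blocks together leave \emph{no} feasible outcome unblocked. The blocks are arranged cyclically---the family's top choice $(d_1,d_2)$ is broken by $x$; the singletons' joint favorite ($x$ at $d_1$, $y$ at $d_2$) is broken by the family rotating to $(d_2,d_1)$; and the family's second choice $(d_2,d_1)$ is broken by the family preferring and reaching $(d_1,d_2)$---so one must confirm that every remaining configuration (including those in which a single is ``stuck'' behind a higher‑priority sibling) still falls to one of the two mechanisms. Verifying this exhaustively, rather than accidentally leaving a self‑consistent escape outcome, is the delicate part; once the twelve full cases and the empty‑seat reduction are confirmed, the emptiness of the set of stable outcomes follows immediately.
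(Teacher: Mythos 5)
Your proposal is correct: the instance you describe (two unit-capacity daycares, one sibling family $S=\{s_1,s_2\}$ that wants the split assignments $(d_1,d_2)\succ_S(d_2,d_1)$ above all partially matched tuples, two only children with $d_1\succ_x d_2$ and $d_2\succ_y d_1$, and the crossed priorities $\succ_{d_1}:s_2,x,s_1,y$ and $\succ_{d_2}:s_1,y,s_2,x$) does leave every feasible outcome blocked, and I checked the twelve full configurations plus the empty-seat reduction against Definitions~\ref{def:blocking:only_child} and~\ref{def:blocking:distinct}; the delicate case you worry about (a single ``stuck'' behind a higher-priority sibling, e.g.\ $s_2$ at $d_1$ and $y$ at $d_2$) is indeed broken by $S$ moving to $(d_2,d_1)$, using exactly the self-exclusion feature of $\lambda$ you highlight. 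The paper's own proof is the same kind of argument --- an explicit four-child, two-daycare counterexample verified by finite case analysis --- but with a structurally different instance: there $d_1$ has capacity $2$, the sibling family's only acceptable tuple is $(d_1,d_1)$ (both children at the same daycare, so the block is of type II rather than type III), the two singletons have opposite rankings of $d_1,d_2$, and the cycle runs through the daycares' priorities $\succ_{d_1}:c_1,c_3,c_2,c_4$ and $\succ_{d_2}:c_4,c_3$; the paper also shortens the enumeration to five maximal family-rational outcomes rather than all feasible ones. Your construction is closer to the classical Roth-style couples instability with unit capacities and showcases the role of excluding a family's own members from $\lambda$, at the cost of a longer (though routine) enumeration; the paper's buys a shorter case analysis and exercises the same-daycare (twins) blocking coalition instead. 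The only thing separating your write-up from a complete proof is actually carrying out the case table you describe, but the instance and the two blocking mechanisms are correctly specified and no case escapes them.
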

% We provide a detailed proof for Theorem~\ref{theo:non-existence:stable} in Appendix.

\begin{proof}
We prove Theorem~\ref{theo:non-existence:stable} through the following counter-example. Consider three families $f_1$ with two children $C_{f_1} = \{c_1, c_2\}$, $f_2$ with one child $C_{f_2} = \{c_3\}$, $f_3$ with one child $C_{f_3} = \{c_4\}$, and two daycares $d_1$ with $2$ seats and $d_2$ with $1$ seat. The preference and priority profiles are as follows:
\vspace{-1mm}
 \begin{align*}
 & \succ_{f_1}: (d_1, d_1), \quad \succ_{f_2}: (d_2), (d_1) \quad \succ_{f_3}: (d_1), (d_2) 
\\
 & \succ_{d_1}: c_1, c_3, c_2, c_4 \qquad \succ_{d_2}: c_4, c_3
\end{align*}

There are five feasible and family rational outcomes to which we cannot add any more children without violating feasibility. We next show that none of them satisfies stability.
\begin{itemize}
    \item For outcome $\mu_1 = \{(c_1, d_1), (c_2, d_1), (c_3, d_2)\}$, family $f_3$ can form a blocking coalition with daycare $d_2$, as 
    $\lambda(\mu_1, d_2, \{c_4\}, \{c_4\}) = 0 \leq q_{d_2} - 1$. In other words, no child matched to daycare $d_2$ in $\mu_1$ has a higher daycare priority than child $c_4$.
    \item For outcome $\mu_2 = \{(c_1, d_1), (c_2, d_1), (c_4, d_2)\}$, family $f_2$ can form a blocking coalition with daycare $d_1$, as 
    $\lambda(\mu_2, d_1, \{c_3\}, \{c_3\}) = 1 \leq q_{d_1} - 1$. In other words, only one child, i.e., $c_1$, who is matched to daycare $d_1$ in $\mu_2$, has a higher daycare priority than child $c_3$. 
    \item For outcome $\mu_3 = \{(c_4, d_1), (c_3, d_2)\}$, family $f_1$ can form a blocking coalition with daycare $d_1$, as 
    $\lambda(\mu_3, d_1, C_{f_1}, C_{f_1}) = 0 \leq q_{d_1} - 2$. In other words, no child matched to daycare $d_1$ in $\mu_3$ has a higher daycare priority than either child $c_1$ or $c_2$. 
    \item For outcome $\mu_4 = \{(c_4, d_1), (c_3, d_1)\}$, family $f_2$ can form a blocking coalition with daycare $d_2$, as 
    $\lambda(\mu_4, d_2, \{c_3\}, \{c_3\}) = 0 \leq q_{d_2} - 1$. In other words, no child matched to daycare $d_2$ in $\mu_4$ has a higher daycare priority than child $c_3$. 
    \item For outcome $\mu_5 = \{(c_3, d_1), (c_4, d_2)\}$, family $f_3$ can form a blocking coalition with daycare $d_1$, as 
    $\lambda(\mu_5, d_1, \{c_4\}, \{c_4\}) = 1 \leq q_{d_1} - 1$. In other words, only one child, i.e., $c_3$, who is matched to daycare $d_1$ in $\mu_5$, has a higher daycare priority than child $c_4$.
\end{itemize}
This completes the proof of Theorem~\ref{theo:non-existence:stable}.
\end{proof}

%%%%%%%%%%%%%%%%%%%%%%%%%%%%%%%%%%%%%%%%%%%%%%%%
% Related Work
%%%%%%%%%%%%%%%%%%%%%%%%%%%%%%%%%%%%%%%%%%%%%%%%

\section{Integer Programming}

In this section, we present a practical algorithm based on Integer Programming (IP) for the real-life data sets provided by three municipalities where each family has no more than three children and at most one pair of twins. 

We first introduce another representation of preference orderings derived from family preferences used in the IP algorithm. Given family $f$ with $k$ children $C_f = \{c_1, \ldots, c_k\}$ and family preference ordering $\succ_f$, a \emph{projected preference ordering} $\succ_{f,c_i}$ of child $c_i \in C_f$ consists of the $i$-th daycare in each tuple of daycares in $\succ_f$. 
For an only child $c$, there is no difference among $\succ_c$, $\succ_f$ and $\succ_{f, c}$. We next illustrate the relationship of $\succ_c, \succ_f, \succ_{f, c}$ for child $c$ who has a sibling through Example~\ref{example:preferences_relationship}. 
\begin{example}
\label{example:preferences_relationship}
Consider one family $f$ with two children $C_{f} = \{c_1, c_2\}$ and three daycares $D=\{d_0, d_1, d_2\}$ (including a dummy daycare $d_0$). 
Suppose both children $c_1$ and $c_2$ have the same individual preference ordering $d_1 \succ_c d_2$. 

Family $f$ requires that either both children are enrolled at the same daycare or if only one child can be matched, then higher precedence is given to child $c_1$. This rules out the possibility that two children are enrolled at different daycares.
The family preference ordering $\succ_f$ and the projected preference ordering $\succ_{f,c}$ of each child $c$ are as follows:
\begin{align*}
& \succ_f: (d_1, d_1), (d_2, d_2), (d_1, d_0), (d_2, d_0), (d_0, d_1), (d_0, d_2)
\\
& \succ_{f, c_1}: d_1, d_2, d_1, d_2, d_0, d_0
\\
& \succ_{f, c_2}: d_1, d_2, d_0, d_0, d_1, d_2
\end{align*}
\end{example}
As shown in Example~\ref{example:preferences_relationship}, a given daycare $d$ may appear multiple times in the projected preference ordering $\succ_{f, c}$ and the dummy daycare $d_0$ cannot be omitted in $\succ_{f,c}$ if child $c$ has another sibling. 
% Note that family preference ordering $\succ_f$ is generated from the combination of preferences of children $C_f$. That means, if daycare $d$ appears in the projected preference ordering $\succ_{f, c}$, then daycare $d$ must appear in the individual preference ordering $\succ_c$ of child $c$.
% In other words, given any family $f$ with $k$ children ($k \leq 3$), for each position $p$ in the range of family preferences' length, at most two children apply to the same daycare at position $p$. 

Given a projected preference ordering $\succ_{f,c}$ of child $c$ and a position $p \in [1, \mid\succ_{f,c}\mid]$ in the range from $1$ to the length of child $c$'s projected preference ordering $\succ_{f,c}$, let $d(c, p)$ denote the daycare at position $p$ in $\succ_{f,c}$ and let $P(c, d) = \{p:d(c, p) = d\}$ denote the set of positions corresponding to daycare $d$ in $\succ_{f,c}$.  

We next introduce four types of binary variables used in the IP algorithm.
% \subsection{Variables $y$}
For each child $c \in C$ and each position $p \in [1, \mid \succ_{f,c} \mid]$, create a variable $y_{c, p}$ indicating whether child $c$ is matched to the $p$-th element in $\succ_{f,c}$~\footnote{It is inaccurate to state that ``child $c$ is matched to the $p$-th `daycare' in $\succ_{f,c}$'', as the same daycare at position $p$ may appear multiple times in $\succ_{f,c}$.}.
\begin{equation}
\label{var:y}
y_{c, p} = 
\begin{cases}
1 \quad\text{\small{if $c$ is matched to the $p$-th element in $\succ_{f,c}$}}
\\
0 \quad\text{\small{otherwise.}}
\end{cases}
\end{equation}

% \subsection{Variables $\alpha$}
For each child $c \in C$ and each position $p \in [1, \mid \succ_{f,c} \mid]$, create a variable $\alpha_{c, p}$ indicating whether child $c$ is matched to the $r$-th element in $\succ_{f,c}$ with $r \leq p$ as follows:
% Given child $c$ and position $p$, variable $\alpha_{c, p}$ is calculated by simply summing up $y_{c, r}$ for $r \in [1, p]$.
\begin{equation}
\label{var:alpha}
\alpha_{c, p} = 
\sum_{r=1}^p y_{c, r}.
\end{equation}
%
% \subsection{Variables $\beta$}
For each child $c \in C$ who has a sibling and each position $p \in [1, \mid \succ_{f,c} \mid]$, create a variable $\beta_{c, p}$ indicating whether child $c$ can be matched to the $p$-th element without modifying other families' assignment. Please note that $\beta_{c, p} = 1$ if child $c$ cannot be matched to the $p$-th element without changing other families' assignment, and $\beta_{c, p} = 0$ otherwise.
There are two cases when determining the value of $\beta_{c, p}$. The first case is that child $c$ from family $f$ has a twin $c_{t}$ and both children apply to the same daycare at position $p$, i.e., $ d(c, p) = d(c_{t},p)$.
\begin{equation}
\label{var:beta:1}
\beta_{c, p} = 
\begin{cases}
1 \quad \text{\small{if $d(c, p)$ admits at least}}
\\  
\ \ \quad \text{\small{$q_d-1$ children except for $C_f$}}
\\
0 \quad \text{\small{otherwise.}}
\end{cases}
\end{equation}
The second case is that child $c$ and all of her siblings apply to different daycares at position $p$:
\begin{equation}
\label{var:beta:2}
\beta_{c, p} = 
\begin{cases}
1 \quad \text{\small{if $d(c, p)$ admits $q_d$ children except for $C_f$}}
% \\  \textbf{
% \ \ \quad \text{excluding $C_f$}}
\\
0 \quad \text{\small{otherwise.}}
\end{cases}
\end{equation}
%
% \subsection{Variables $\gamma$}
For each child $c \in C$ who has a sibling and each position $p \in [1, \mid \succ_{f,c} \mid]$, create a variable $\gamma_{c, p}$ indicating whether child $c$ can coexist with all matched children who have higher priority than either $c$ or his twin $c_t$ (if any).
Similar to variables $\beta$, variable $\gamma_{c, p} = 1$ means that child $c$ cannot coexist with all matched children with higher priority than either $c$ or her twin $c_t$ (if any), and $\gamma_{c, p} = 0$ otherwise. 
% The difference from variables $\beta$ is that for variables $\gamma$, we calculate the number of matched children who have higher priority than at least one of the twins (if any) instead of any matched children. 
We also consider two cases depending on whether child $c$ has a twin $c_t$ who applies to the same daycare at position $p$. 
The first case is that child $c$ from family $f$ has a twin $c_{t}$ who applies to the same daycare at position $p$, i.e. $d(c, p) = d(c_{t}, p)$:
\begin{equation}
\label{var:gamma:1}
\gamma_{c, p} = 
\begin{cases}
1 \ \ \text{\small{if $d(c, p)$ admits at least $q_d-1$ children}}
\\  
\ \ \ \ \text{\small{with higher priority than $c$ or $c_t$} except for $C_f$ }
\\
0 \ \ \text{\small{otherwise.}}
\end{cases}
\end{equation}
The second case is that child $c$ and all of her siblings apply to different daycares: 
\begin{equation}
\label{var:gamma:2}
\gamma_{c, p} = 
\begin{cases}
1 \ \ \text{\small{if $d(c, p)$ admits exactly $q_d$ children}}
\\  
\ \ \ \ \text{\small{with higher priority than $c$} except for $C_f$}
\\
0 \ \ \text{\small{otherwise.}}
\end{cases}
\end{equation}

% \subsection{Feasibility}
Recall that an outcome is feasible if i) each child $c\in C$ is matched to at most one daycare; and ii) each daycare $d\in D$ can accommodate at most $q_d$ children. We capture these two requirements through Constraints~\ref{LP:feasibility1} and \ref{LP:feasibility2}, respectively.  
\begin{equation}
\label{LP:feasibility1}
    \sum_{p=1}^{\mid \succ_{f,c} \mid} y_{c, p} \leq 1 \qquad  \forall c\in C
\end{equation}
\begin{equation}
\label{LP:feasibility2}
    \sum_{c\in C} \sum_{p\in P(c, d)} y_{c, p} \leq q_d \qquad  \forall d\in D
\end{equation}
As we use children's projected preferences, we need to ensure that siblings must be matched to a tuple of daycares that are located at the same position of each child's projected preference ordering. That is, for each family $f$ with children $C_f = \{c_1, ... ,c_k\}$, for each position $p \in [1, \mid\succ_{f,c}\mid]$, 
% we have
\begin{equation}
\label{LP:feasibility3}
    y_{c_1, p} = ... = y_{c_k, p}. 
\end{equation}

% \subsection{Family Rationality}
Family rationality is guaranteed by Constraint~\ref{LP:FR} combined with Constraints~\ref{LP:feasibility3}.
% Family rationality a fundamental property which requires that children $C_f$ from each family must be matched to a tuple of daycares that are weakly better than their initial enrollment. 
For each child $c\in C$ with $\omega(c) = d \neq d_0$, we have
\begin{equation}
\label{LP:FR}
    \sum_{p=1}^{\mid \succ_{f,c} \mid} y_{c, p} = 1.
\end{equation}

% \begin{proof}
% For each family $f$, if any of its children is initially enrolled, then Constraint~\ref{LP:feasibility3} and~\ref{LP:FR} ensure that family $f$ must be matched to some tuple of daycares in $\succ_f$; if none of its children is initially enrolled, then family rationality is trivially satisfied.
% \end{proof}

% The constraints used to capture non-wastefulness and stability are very similar, as stability covers non-wastefulness as a special case. Due to space limitations, we explain the constraints for non-wastefulness in Appendix.

\subsection{Non-wastefulness}

We capture non-wastefulness with the following four constraints.

Constraint~\ref{LP:NW-I} corresponds to the blocking coalition concept in Definition~\ref{def:blocking:only_child} which involves families with an only child. 
For each family $f$ with an only child $c$ and each position $p \in [1, |\succ_{f, c}|]$,
let $d$ denote the daycare at position $p$,  
\begin{equation}
\label{LP:NW-I}
    (1 - \alpha_{c,p}) * q_d \leq\sum_{c'\in C} \sum_{p\in P(c', d)} y_{c', p} 
\end{equation}

Constraint~\ref{LP:NW-II} corresponds to the blocking coalition concept in Definition~\ref{def:blocking:twins} which involves families with a pair of twins who apply to the same daycare. 
For each family $f$ with twins denoted by $c_1$ and $c_2$, as well as each position $p$ such that twins $c_1$ and $c_2$ apply to the same daycare at position $p$, let $c^*$ denote either $c_1$ or $c_2$, then we have

\begin{equation}
\label{LP:NW-II}
    (1 - \alpha_{c^*,p}) \leq \beta_{c^*, p}.
\end{equation}

Constraint~\ref{LP:NW-III} corresponds to the blocking coalition concept in Definition~\ref{def:blocking:distinct} which involves families with siblings applying to all different daycares. 
For each family $f$ with children $C_f$ and each position $p$ such that siblings $C_f$ apply to all different daycares at position $p$, the following holds for any child $c \in C_f$, 

\begin{equation}
\label{LP:NW-III}
    (1 - \alpha_{c,p}) \leq \sum_{c' \in C_f}\beta_{c', p}.
\end{equation}

Constraint~\ref{LP:NW-IV} corresponds to the blocking coalition concept in Definition~\ref{def:blocking:mixed} which involves families with a pair of twins applying to the same daycare and a third sibling applying to a different daycare. For each family $f$ with children $C_f = \{c_1, c_2, c_3\}$ and a tuple of daycares $(d_1, d_1, d_2)$ with $d_1 \neq d_2$ at position $p$ in $\succ_f$, let $c^*$ denote either $c_1$ or $c_2$, then the following holds, 
\begin{equation}
\label{LP:NW-IV}
    (1 - \alpha_{c^*,p}) \leq \beta_{c^*, p} + \beta_{c_3, p}.
\end{equation}

\subsection{Stability}
We next explain how to capture stability with the following four constraints corresponding to four types of blocking coalitions.
%
% \subsubsection{Blocking Coalition I}
%
Constraint~\ref{LP:stable-I} corresponds to the blocking coalition concept in Definition~\ref{def:blocking:only_child} which involves families with an only child. 
For each family $f$ with an only child $c$ and each position $p \in [1, \mid \succ_{f,c} \mid]$,
let $d$ denote the daycare at position $p$ and let $C^* \subseteq C$ denote a set of children who have higher priority than child $c$ at daycare $d$, 
\begin{equation}
\label{LP:stable-I}
    (1 - \alpha_{c,p}) * q_d \leq\sum_{c'\in C^*} \sum_{p'\in P(c', d)} y_{c', p'}. 
\end{equation}

%%%%%%%%%%%%%%%%%%%%%%%%%%%%%%%%%%%%%%%%%%%%%%
% Proof for Constraint~\ref{LP:stable-I}
%%%%%%%%%%%%%%%%%%%%%%%%%%%%%%%%%%%%%%%%%%%%%%

\begin{proof}
Consider any family $f$ with an only child $c$ and any position $p \in [1, \mid \succ_{f,c} \mid]$. 

If child $c$ is matched to the $r$-th element in her projected preference ordering $\succ_{f,c}$ with $r\leq p$, i.e., $\alpha_{c,p} = 1$, then Constraint~\ref{LP:stable-I} is trivially satisfied and 
child $c$ will not form a blocking coalition with the daycare at position $p$ as child $c$ is matched to a weakly better element at position $r$ (violating condition i) of Definition~\ref{def:blocking:only_child}). 

If child $c$ is not matched to the $r$-th element in $\succ_{f,c}$ with $r\leq p$, i.e., $\alpha_{c,p} = 0$, then Constraint~\ref{LP:stable-I} requires that the number of matched children with higher priorities than child $c$ at daycare $d$ reaches daycare $d$'s capacity $q_d$. In that case, daycare $d$ will not form a blocking coalition with child $c$, as daycare $d$ is full and all matched children have higher daycare priorities than child $c$ (violating condition ii) of Definition~\ref{def:blocking:only_child}).

Thus family $f$ and daycare $d$ at position $p$ will not form a blocking coalition.
\end{proof}

% \subsubsection{Blocking Coalition II}

Constraint~\ref{LP:stable-II} corresponds to the blocking coalition concept in Definition~\ref{def:blocking:twins} which involves families with a pair of twins who apply to the same daycare. 
For each family $f$ with twins denoted by $c_1$ and $c_2$, as well as each position $p$ such that twins $c_1$ and $c_2$ apply to the same daycare at position $p$, let $c^*$ denote the child who has lower priority at daycare $d(c^*,p)$ between $c_1$ and $c_2$, then we have 
\begin{equation}
\label{LP:stable-II}
    (1 - \alpha_{c^*,p}) \leq \gamma_{c^*, p}.
\end{equation}

%%%%%%%%%%%%%%%%%%%%%%%%%%%%%%%%%%%%%%%%%%%%%%
% Proof for Constraint~\ref{LP:stable-II}
%%%%%%%%%%%%%%%%%%%%%%%%%%%%%%%%%%%%%%%%%%%%%%

\begin{proof}
Consider any family $f$ with a pair of twins denoted by $c_1$ and $c_2$ as well as any position $p$ such that children $c_1$ and $c_2$ apply to the same daycare at position $p$. Let $c^*$ denote the child who has a lower priority at daycare $d(c^*, p)$ between $c_1$ and $c_2$.

If child $c^*$ is matched to the $r$-th element in her projected preference ordering $\succ_{f,c^*}$ with $r\leq p$, i.e., $\alpha_{c^*,p} = 1$, then Constraint~\ref{LP:stable-II} is trivially satisfied as $\gamma_{c^*, p}$ could be either $0$ or $1$. 
Family $f$ will not form a blocking coalition with the daycare at position $p$ as family $f$ is matched to a weakly better tuple of daycares at position $r$ (violating condition i) of Definition~\ref{def:blocking:twins}).

If child $c^*$ is not matched to the $r$-th element in $\succ_{f,c^*}$ with $r\leq p$, i.e., $\alpha_{c^*,p} = 0$, then Constraint~\ref{LP:stable-II} requires that $\gamma_{c^*, p} = 1$. By
Constraint~\ref{var:gamma:1}, $\gamma_{c^*, p} = 1$ implies that daycare $d(c, p)$ admits at least $q_d-1$ children excluding $C_f$ who have a higher priority than child $c^*$. Thus daycare $d$ will not form a blocking coalition with family $f$, as children $C_f$ cannot coexist with all matched children who have a higher daycare priority than child $c^*$ (violating condition ii) of Definition~\ref{def:blocking:twins}).

Thus family $f$ and daycare $d$ at position $p$ will not form a blocking coalition.
\end{proof}

% \subsubsection{Blocking Coalition III}

Constraint~\ref{LP:stable-III} corresponds to the blocking coalition concept in Definition~\ref{def:blocking:distinct} which involves families with siblings applying to all different daycares. 
For each family $f$ with children $C_f$ and each position $p$ such that siblings $C_f$ apply to all different daycares at position $p$, the following holds for any child $c \in C_f$, 
\begin{equation}
\label{LP:stable-III}
    (1 - \alpha_{c,p})  \leq \sum_{c' \in C_f}\gamma_{c', p}.
\end{equation}

%%%%%%%%%%%%%%%%%%%%%%%%%%%%%%%%%%%%%%%%%%%%%%
% Proof for Constraint~\ref{LP:stable-III}
%%%%%%%%%%%%%%%%%%%%%%%%%%%%%%%%%%%%%%%%%%%%%%

\begin{proof}
Consider any family $f$ with $k (\leq 3)$ children $C_f = \{c_1, \ldots, c_k\}$ and any position $p$ such that children 
$C_f$ apply to $k$ distinct daycares $(d_1, \ldots, d_k)$ at $p$ in $\succ_f$. Let $c$ denote any child from family $f$, i.e., $c \in C_f$.

If child $c$ is matched to the $r$-th element in $\succ_{f,c}$ with $r\leq p$, i.e., $\alpha_{c,p} = 1$, then Constraint~\ref{LP:stable-III} is trivially satisfied as $\gamma_{c', p}$ could be either $0$ or $1$ for any child $c' \in C_f$. 
In this case, family $f$ will not form a blocking coalition with daycares $(d_1, \ldots, d_k)$ at position $p$, as family $f$ is matched to a weakly better tuple of daycares at position $r$ (violating condition i) of Definition~\ref{def:blocking:twins}).

If child $c$ is not matched to the $r$-th element in $\succ_{f,c}$ with $r\leq p$, i.e., $\alpha_{c,p} = 0$, then Constraint~\ref{LP:stable-III} requires that $\sum_{c' \in C_f}\gamma_{c', p} \geq 1$. In other words, there exists at least one child $c' \in C_f$ with $\gamma_{c', p} = 1$. By Constraint~\ref{var:gamma:2}, $\gamma_{c', p} = 1$ implies that daycare $d = d(c',p)$ admits at least $q_d$ children excluding $C_f$ who have a higher priority than child $c'$. Thus daycare $d(c',p)$ will not form a blocking coalition with family $f$, as child $c'$ cannot coexist with all matched children who have higher daycare priorities than child $c'$ (violating condition ii) of Definition~\ref{def:blocking:twins}).

Thus family $f$ and a tuple of daycares $(d_1, \ldots, d_k)$ at position $p$ in $\succ_f$ will not form a blocking coalition.
\end{proof}

% \subsubsection{Blocking Coalition IV}

Constraint~\ref{LP:stable-IV} corresponds to the blocking coalition concept in Definition~\ref{def:blocking:mixed} which involves families with a pair of twins applying to the same daycare and a third sibling applying to a different daycare. For each family $f$ with children $C_f = \{c_1, c_2, c_3\}$ and a tuple of daycares $(d_1, d_1, d_2)$ with $d_1 \neq d_2$ at position $p$ in  $\succ_f$, let $c^*$ denote the child who has lower priority at daycare $d_1$ between $c_1$ and $c_2$, then the following holds, 
\begin{equation}
\label{LP:stable-IV}
    (1 - \alpha_{c^*,p}) \leq \gamma_{c^*, p} + \gamma_{c_3, p}.
\end{equation}

\begin{proof}
Consider any family $f$ with three children $C_f = \{c_1, c_2, c_3\}$ who apply to a tuple of daycares $(d_1, d_1, d_2)$ with $d_1 \neq d_2$ at any position $p$ in $\succ_f$. Let $c^*$ denote the child who has a lower priority at daycare $d_1$ between $c_1$ and $c_2$.

If child $c^*$ is matched to the $r$-th element in her projected preference ordering $\succ_{f,c^*}$ with $r\leq p$, i.e., $\alpha_{c^*,p} = 1$, then Constraint~\ref{LP:stable-IV} is trivially satisfied as $\gamma_{c^*, p}$ and $\gamma_{c_3, p}$ could be either $0$ or $1$. 
Family $f$ will not form a blocking coalition with the daycare at position $p$ as family $f$ is matched to a weakly better tuple of daycares at position $r$ (violating condition i) of Definition~\ref{def:blocking:twins}).

If child $c^*$ is not matched to the $r$-th element in $\succ_{f,c^*}$ with $r\leq p$, i.e., $\alpha_{c^*,p} = 0$, then Constraint~\ref{LP:stable-IV} requires that $\gamma_{c^*, p} + \gamma_{c_3, p} \geq 1$. Then either $\gamma_{c^*, p} = 1$ or $\gamma_{c_3, p} = 1$ holds.
By Constraint~\ref{var:gamma:1}, $\gamma_{c^*, p} = 1$ implies that daycare $d_1$ admits at least $q_{d_1}-1$ children excluding $C_f$ who have a higher priority than child $c^*$. By Constraint~\ref{var:gamma:2}, $\gamma_{c_3, p} = 1$ implies that daycare $d_2$ admits at least $q_{d_2}$ children excluding $C_f$ who have higher priorities than child $c_3$.
For both cases, either daycare $d_1$ or daycare $d_2$ will not form a blocking coalition with family $f$, as children $C_f$ cannot coexist with all matched children who have higher daycare priorities than child $c^*$ or child $c_3$ (violating condition ii) of Definition~\ref{def:blocking:twins}).

Thus family $f$ and daycares $d_1$ and $d_2$ will not form a blocking coalition.
\end{proof}

% \section{Objective}

The objective of the IP algorithm is to maximize the total number of matched children as described in Constraint~\ref{LP:objective}. Recall that a dummy daycare $d_0$ can be included in children's projected preferences, so we must remove the number of children who are matched to daycare $d_0$. 
% Let $P(c, d_0) = \{p': d(c, p') = d_0\}$ denote the set of positions corresponding to the dummy daycare $d_0$ in child $c$'s projected preference $\succ_{f,c}$. 
\begin{equation}
\label{LP:objective}
 \max   \sum_{c \in C} \sum_{p=1}^{\mid\succ_{f,c}\mid} y_{c, p} - \sum_{c \in C} \sum_{p' \in P(c, d_0)} y_{c, p'}
\end{equation}

%%%%%%%%%%%%%%%%%%%%%%%%%%%%%%%%%%%%%%%%%%%%%%
% Experiments
%%%%%%%%%%%%%%%%%%%%%%%%%%%%%%%%%%%%%%%%%%%%%%
\section{Experiments}

In this section, we evaluate the performance of our new algorithm by running experiments on five real-life data sets provided by the following three municipalities.
% Shibuya ward, Tama city and Moriguchi city. 
%
Shibuya ward is one of the major commercial centers in Tokyo with a population of more than $220,000$.
Tama is a rural city located in the west of the Tokyo Metropolis with a population of less than $150,000$.
Moriguchi is one of the satellite cities of the Osaka Metropolis with a population of around $140,000$.

All experiments were conducted on a laptop with an Apple M1-Max processor and 32Gb of memory. Our IP model was implemented via Google OR-Tools using the default CP-SAT solver~\footnote{https://developers.google.com/optimization}. For all five data sets, stable outcomes were computed in no more than 2 seconds. We also run more experiments, where we restrict IP constraints; the running times vary from a few seconds to around 40 minutes. The details of these experiments are presented in Appendix.

% \section{Data Sets}

We summarize the basic information of the five data sets in Table~\ref{tab:data}. We found three common features in these data sets. Firstly, each family contains no more than three children and at most one pair of twins. 
Secondly, around $15\%$ of children have siblings and more than $7\%$ of children prefer to transfer. 
Thirdly, daycares are suffering from a shortage of seats for ages $0$ and $1$, but there is an excess of seats for ages $3$ and above. We give more details about the imbalance of demand and supply by age in  Appendix.
% Table~\ref{table:demand}. 
% We will discuss \emph{flexible quotas} that allow daycares to redistribute resources among different ages in the end of this paper.

For Shibuya ward, the status quo outcome is calculated by the commercial software package mentioned in Introduction. We have no access to the detailed implementation of that algorithm and it is not clear whether it satisfies the desirable theoretical properties discussed in this paper.
% Our LP returns the same outcome as the actual one for both year $2021$ and $2022$. 
%
For Tama city and Moriguchi city, the status quo outcome is calculated manually and it takes several government officials one or two weeks to determine and verify the outcome. 
% For Tama-2022, our LP algorithm returns the same number of matched children as the actual one does but finds a matching that is not blocked by any coalition of families and daycare centers. In contrast, the actual outcome admits two blocking coalition. 
%

We compare the outcomes yielded by IP and the status quo in Table~\ref{table:outcome} and we can draw two conclusions from these experiments: i) our IP algorithm outperforms the currently applied methods used by Tama and Moriguchi city and performs the same as the commercial algorithm for Shibuya ward~\footnote{Although our IP algorithm returned exactly the same outcome as the status quo (i.e., the commercial software) for both Shibuya data sets, the results can be different for other cases. For instance, Tama also purchased the same software, but the performance was not as good as the status quo (i.e., the manual method) and they decided not to deploy it and chose our new algorithm instead.}, ii) there always exists a stable outcome for five real-life data sets. Our new algorithm is accepted by Tama city and will be deployed in its matching system shortly.

Due to the confidentiality agreement with local municipalities, we do not have the authorization to disclose these data sets and our codes, but we have given full description of our IP algorithm for reproducibility purpose.

\begin{table}[tb]
\begin{center}
\resizebox{\columnwidth}{!}{
\begin{tabular}{cccccccc}
    \hline
    \multirow{2}{*}{}
    & \# children in & \multicolumn{2}{c}{Tama} & \multicolumn{2}{c}{Shibuya} & Moriguchi
    \\
    & the family & 2021 & 2022 & 2021 & 2022 & 2021
    \\
    \hline
    \# children & -- & 635 & 550 & 1589 & 1372 & 915
    \\
    \hline
    \# daycares & -- & 33 & 33 & 72 & 72 & 54
    \\
    \hline
    \multirow{3}{*}{\# families}
    & $1$ & 542 & 462 & 1331 & 1161 & 777
    \\
    & $2$ & 42 & 44 & 120 & 101 & 66
    \\
    & $3$ & 3 & 0 & 6 & 3 & 2
    \\
    \hline
    % \multirow{2}{*}{twins}
    \# families & 2 & 3 & 8 & 14 & 25 & 9
    \\
    with twins & 3 & 3 & 0 & 4 & 3 & 1
    \\
    \hline
    % \multirow{3}{*}{transfer}
    \# children & 1 & 41 & 24 & 92 & 66 & 85
    \\
    who prefer & 2 & 20 & 16 & 41 & 27 & 10
    \\
    to transfer & 3 & 0 & 0 & 2 & 2 & 0
    \\
    \hline
\end{tabular}
}
\end{center}
\caption{Numbers of children, daycares, and families in five data sets. The second column corresponds to families with 1,2 or 3 children. For instance, the bottom row calculates the number of children who prefer to transfer from families with 1, 2, or 3 children.}
\label{tab:data}
\end{table}

\begin{table*}[tb]
\centering
\resizebox{1.9\columnwidth}{!}{
\begin{tabular}{ccccccccccc}
    \hline
     & \multicolumn{2}{c}{Tama-21} & \multicolumn{2}{c}{Tama-22} & \multicolumn{2}{c}{Shibuya-21} & \multicolumn{2}{c}{Shibuya-22}
     & \multicolumn{2}{c}{Moriguchi-21}
    \\
        & status quo & IP & status quo & IP & status quo & IP & status quo & IP & status quo & IP
    \\
    \hline
    \small{\# matched children} & 558 & 560 & 464 & 464 & 1307 & 1307 & 1087 & 1087 & 669 & 680
    \\
    \hline
    \small{\# blocking coalition} & 42 & 0 & 2 & 0 & 0 & 0 & 0 & 0 & 116 & 0
    \\
     \hline
    \small{family rationality} & \cmark & \cmark & \cmark & \cmark & \cmark & \cmark & \cmark & \cmark & \xmark & \cmark
    \\   
    \hline
    \small{family non-wastefulness} & \xmark & \cmark & \xmark & \cmark & \cmark & \cmark & \cmark & \cmark & \xmark & \cmark
    \\
     \hline
\end{tabular}
}
\caption{Comparison of outcomes for five data sets}
\label{table:outcome}
\end{table*}

\section{Future Work}

% In this paper, we consider a two-sided matching model with transfers and siblings allowed and just take an initial step to study this problem. 
We elaborate on several research questions that would be beneficial for other theoretical models or applications. 
\subsubsection{Flexible Quotas}

One of the main objectives of Japanese daycare matching market is to reduce (or ideally eliminate) the number of unmatched children. A possible approach is to establish flexible quotas instead of hard targets that comply with two feasibility regulations imposed by the government. 
% For instance, the national standard requires that each teacher can take care of up to $3$, $6$, $6$, $20$, $30$, $30$ children of grade $0$, $1$, $2$, $3$, $4$, $5$, respectively\footnote{Some municipality imposes a stricter standard. For example, Shibuya-ward requires that each teacher can take care of up to $3$, $5$, $6$, $15$, $30$, $30$ children of corresponding grades.}. Suppose daycare $d$ employs $m$ teachers and given a vector $v = (v_0, \ldots, v_5)$ where each element $v_i$ denotes the number of children of grade $i$ matched to daycare $d$, then vector $v$ is feasible w.r.t. teacher-child ratios if the following inequality holds:
%
% $$
% 1/3 \times v_0 + 1/6 \times v_1 + 1/6 \times v_2 + 1/20 \times v_3 + 1/30 \times v_4 + 1/30 \times v_5 \leq m.
% $$
%
As shown in two recent work~\citep{Okum19a,KaKo19a}, flexible quotas could significantly increase the number of matched children.

However, based on the feedback from the municipalities, there are two main reasons that flexible quotas are not employed in the current system. Firstly, daycare centers are operated independently and municipalities do not have the authority to interfere in their management. Secondly, children of certain ages require specialized facilities (e.g. crawl space for babies) and thus superfluous room for higher grades cannot be utilized by children of age $0$ or $1$. 

We are still interested in designing algorithms under flexible quotas, as it may be possible to reallocate unused room space among children of certain ages. For example, suppose age $0$ and $1$ form group $1$, age from $2$ to $5$ form group $2$ and room space is shared between each age group.

\subsubsection{Fairness}
Although experimental results on several real-life data sets show existence of stable outcomes in practice, there is no theoretical guarantee that there always exists a stable outcome in general. Thus it is worth studying how to properly define fairness concepts that are compatible with family rationality and family non-wastefulness (or possibly weaker concepts of non-wastefulness). 

Here are several factors that need to be considered when defining fairness: i) should a child or a family has envy towards a child, a family or a set of families? ii) which priority ordering should we consider when
giving precedence to some children or families (daycares' priorities, a common ordering over children, a common ordering over families, or combinations of these priority orderings)? iii) should we confine envy to the set of children who have the same grade only or should we allow envy across grades? iv) when some agent $a$ (a child or a family) has envy towards another agent $b$, should the envy be deemed as justified only if there exists a feasible outcome after agent $a$ replaces agent $b$? 
v) can we compute a fair outcome efficiently, e.g., in polynomial-time? 
%(it is important to consider computational complexity when designing practical algorithms)?

\subsubsection{Indifferences} 
Families are requested to submit strict preferences over tuples of daycare centers. Here are two aspects in which indifferences may be superior. Firstly, it is natural and reasonable to assume that a family is indifferent between some options. Secondly, allowing indifferences can increase the number of matched children in theory. The extreme case is that each family submits a binary preference ordering over tuples of daycares (i.e. only acceptable tuples). 

The common ordering over children and daycare priorities over children are derived from priority scores and ties occur often in the data sets. As priorities are important in determining which children take precedence over others, it is unclear how different choices of tie-breaking for priority orderings affect the outcome. 

\subsubsection{Other Properties} 
Pareto optimality is a stronger concept than non-wastefulness, which requires that there is no other outcome in which all agents are weakly better off and at least one agent is strictly better off. As stability and Pareto optimality are generally incompatible for two-sided matching problems~\citep{Roth84a}, we have to give up one of them when designing algorithms. Although it is not the main concern of this work, it is interesting to consider i) whether there exists a Pareto optimal outcome that also satisfies some other properties; ii) whether it promotes the welfare of children.

Strategy-proofness is an important property in mechanism design and has been extensively studied in matching markets. Although it may be impossible to design algorithms that achieve strategy-proofness and other desirable properties for the daycare matching problem, we want to figure out to what extent families are willing to manipulate their strategies (i.e. misreporting their true preferences) in practice.

\section{Acknowledgement}
This work is partially supported by JSPS KAKENHI Grant Number JP20H00609 and JP21H04979.

\bibliography{reference_220812}

% \end{document}

% \newpage
\appendix

\section{More on Experiments}

We next give a detailed description of five data sets and summarize the imbalance of demand and supply by age in Table~\ref{table:demand}.

\subsubsection{Tama 2021}

Tama's data set for the year $2021$ involves $635$ children, $587$ families and $33$ daycares. 
There are $42$ families with two children and $3$ out of them have a pair of twins.
Only $3$ families have three children and all of them contain a pair of twins. There are $61$ children are initially enrolled and $20$ out of them have siblings. The age distribution of the children is $28.50\%$, $40.47\%$, $15.43\%$, $11.81\%$, $2.68\%$ and $1.10\%$ corresponding to ages from $0$ to $5$.
For families with an only child, the average length of family preferences is $3.3$ and the maximum length is $15$. For families with siblings, the average length of family preferences is $38.37$ and the maximum length is $1088$.

%%%%%%%%%%%%%%%%%%%%%%%%%%%%%%%%%%%%%%%%%%%%%%
% age distribution
%%%%%%%%%%%%%%%%%%%%%%%%%%%%%%%%%%%%%%%%%%%%%%
\begin{table}[tb]
\centering
\resizebox{\columnwidth}{!}{
\begin{tabular}{ccccccccc}
    \hline
    & age & $0$ & $1$ & $2$ & $3$ & $4$ & $5$
     \\
    \hline
    \multirow{2}{*}{Tama-21}
    & \# applicants & $181$ & $257$ & $98$ & $75$ & $17$ & $7$
     \\
    & \# capacity   & $241$ & $222$ & $123$ & $106$ & $57$ & $68$
     \\
    \hline
    \multirow{2}{*}{Tama-22}
    & \# applicants & $181$ & $219$ & $91$ & $43$ & $8$ & $8$
     \\
    & \# capacity   & $230$ & $212$ & $88$ & $83$ & $41$ & $58$
     \\
    \hline
    \multirow{2}{*}{Shibuya-21}
    & \# applicants & $569$ & $656$ & $171$ & $136$ & $37$ & $20$
     \\
    & \# capacity   & $509$ & $613$ & $239$ & $265$ & $268$ & $275$
     \\
    \hline
    \multirow{2}{*}{Shibuya-22}
    & \# applicants & $540$ & $582$ & $134$ & $67$ & $33$ & $16$
     \\
    & \# capacity   & $497$ & $586$ & $186$ & $233$ & $255$ & $306$
     \\
    \hline
        \multirow{2}{*}{\small{Moriguchi-21}}
    & \# applicants & $257$ & $354$ & $184$ & $89$ & $17$ & $14$
     \\
    & \# capacity & $369$ & $294$ & $156$ & $66$ & $38$ & $38$
     \\
    \hline
\end{tabular}
}
\caption{Demand and supply by age}
\label{table:demand}
\end{table}
\subsubsection{Tama 2022}

Tama's data set for the year $2022$ involves $550$ children, $506$ families and $33$ daycares. 
No family is endowed with three children and $462$ families has an only child. The remaining $44$ families have two children and contain $8$ pairs of twins. Only $40$ children are initially enrolled and $16$ out of them have siblings. The age distribution of the children is $32.91\%$, $39.82\%$, $16.55\%$, $7.82\%$, $1.45\%$ and $1.45\%$ corresponding to ages from $0$ to $5$.
For families with an only child, the average length of family preferences is $3.0$ and the maximum length is $8$. For families with siblings, the average length of family preferences is $8.4$ and the maximum length is $64$.

\subsubsection{Shibuya 2021}

Shibuya's dataset of the year $2021$ involves $1589$ children, $1457$ families and $72$ daycares. 
The number of families with two children is $120$ and $14$ of them have twins. There are $6$ families with three children and $4$ of them contain a pair of twins. The number of children with initial enrollment is $135$ and $43$ of them have siblings. The age distribution of the children is $35.81\%$, $41.28\%$, $10.76\%$, $8.56\%$, $2.33\%$, $1.26\%$ corresponding to ages from $0$ to $5$.
For families with an only child, the average length of family preferences is $4.45$ and the maximum length is $11$. For families with siblings, the average length of family preferences is $14.95$ and the maximum length is $120$.

\subsubsection{Shibuya 2022}

Shibuya's dataset of the year $2022$ involves $1372$ children, $1265$ families and $72$ daycares. 
The number of families with two children is $101$ and $25$ of them are twins. There are $3$ families with three children and all of them contain a pair of twins. The number of children with initial enrollment is $95$ and $29$ out of them have siblings. The age distribution of the children is $39.36\%$, $42.42\%$, $9.77\%$, $4.88\%$, $2.40\%$, $1.17\%$ corresponding to ages from $0$ to $5$.
For families with an only child, the average length of family preferences is $3.78$ and the maximum length is $10$. For families with siblings, the average length of family preferences is $6.58$ and the maximum length is $64$.

\subsubsection{Moriguchi 2021}

Moriguchi's dataset of the year $2021$ involves $915$ children, $845$ families and $54$ daycares. 
The number of families with two children is $66$ and $9$ of them are twins. There are $2$ families with three children and one of them contains a pair of twins. The number of children with initial enrollment is $95$ and $10$ out of them have siblings. The age distribution of the children is $28.09\%$, $38.69\%$, $20.11\%$, $9.72\%$, $1.86\%$, $1.53\%$ corresponding to ages from $0$ to $5$.
For families with an only child, the average length of family preferences is $2.56$ and the maximum length is $5$. For families with siblings, the average length of family preferences is $5.20$ and the maximum length is $24$.

\subsection{Other Experimental Results}

We next present more experimental results where only some of the IP constraints are satisfied. 
We refer to these new algorithms
as IP-X where the difference is whether we allow blocking coalition for certain families. The fundamental constraints include feasibility, family rationality, and family non-wastefulness, which are captured by the IP-0 algorithm. IP-1 indicates that there is no blocking coalition for families with one child.  IP-2 indicates that there is no blocking coalition for families with one and two children. IP-3 indicates that there is no blocking coalition for families with one, two and three children.
Note that the number of blocking coalitions may be different, as we do not consider it in the objective function.

% outcome
\begin{table}[tb]
\centering
\resizebox{\columnwidth}{!}{
\begin{tabular}{cccccc}
    \hline
     &  status quo & IP-3 & IP-2 & IP-1 & IP-0
     \\
     \hline
    family rationality & \cmark & \cmark & \cmark & \cmark & \cmark
    \\   
    \hline
    family non-wastefulness & \xmark & \cmark & \cmark & \cmark & \cmark
    \\
     \hline
     \# matched children & 558 & 560 & 560 & 569 & 595
     \\
     \hline
     \# blocking coalition & 42 & 0 & 0 & 31 & 373
     \\
     \hline
     % 
    %  \# running time (SCIP) & --  & 15 & 15 & 25 & 258
    %  \\
    % \hline
    %     \# running time (Gurobi) & --  &  &  &  & 
    %  \\
    % \hline
    \# running time (s) & --  & 1.86 & 1.87 & 2.78 & 49
     \\
    \hline
\end{tabular}}
\caption{Outcomes for Tama 2021}
\label{table:Tama21}

\resizebox{\columnwidth}{!}{
\begin{tabular}{cccccc}
    \hline
     &  status quo & IP-3 & IP-2 & IP-1 & IP-0
     \\
     \hline
    family rationality & \cmark & \cmark & \cmark & \cmark & \cmark
    \\   
    \hline
    family non-wastefulness & \xmark & \cmark & \cmark & \cmark & \cmark
    \\
     \hline
     \# matched children & 464 & 464 & 464 & 469 & 509
     \\
     \hline
     \# blocking coalition & 2 & 0 & 0 & 30 & 336
     \\
    \hline
    %  \# running time (SCIP) & -- & 1 & 1 & 18 & 12
    %  \\
    % \hline
    % \# running time (Gurobi) & -- &  &  &  & 
    %  \\
    % \hline
   \# running time (s) & --  & 0.16 & 0.15 & 0.62 & 1.5
     \\
    \hline
\end{tabular}
}
\caption{Outcomes for Tama 2022}
\label{table:Tama22}

\resizebox{\columnwidth}{!}{
\begin{tabular}{cccccc}
    \hline
     &  status quo & IP-3 & IP-2 & IP-1 & IP-0
     \\
     \hline
    family rationality & \cmark & \cmark & \cmark & \cmark & \cmark
    \\   
    \hline
    family non-wastefulness & \cmark & \cmark & \cmark & \cmark & \cmark
    \\
     \hline
     \# matched children & 1307 & 1307 & 1312 & 1339 & 1464
     \\
     \hline
     \# blocking coalition & 0 & 0 & 7 & 311 & 2481
     \\
    \hline
    %  \# running time (SCIP) & -- & 9 & 83 & 1800 & 3630
    %  \\
    % \hline
    %     \# running time (Gurobi) & --  &  &  &  & 5
    %  \\
    % \hline
     \# running time (s) & -- & 1.85 & 3.43 & 2526 & 480
     \\
    \hline
\end{tabular}
}
\caption{Outcomes for Shibuya 2021}
\label{table:Shibuya21}

\resizebox{\columnwidth}{!}{
\begin{tabular}{cccccc}
    \hline
     &  status quo & IP-3 & IP-2 & IP-1 & IP-0
     \\
     \hline
    family rationality & \cmark & \cmark & \cmark & \cmark & \cmark
    \\   
    \hline
    family non-wastefulness & \cmark & \cmark & \cmark & \cmark & \cmark
    \\
     \hline
     \# matched children & 1087 & 1087 & 1087 & 1111 & 1206
     \\
     \hline
     \# blocking coalition & 0 & 0 & 0 & 86 & 1530
       \\
    \hline
    %  \# running time (SCIP) & -- & 3 & 3 & 1248 & 931
    %  \\
    % \hline
    % \# running time (Gurobi) & -- &  &  &  & 
    %  \\
    %  \hline
    \# running time (s) & -- & 0.57 & 0.58 & 58.7 & 23.8
     \\
    \hline
\end{tabular}
}
\caption{Outcomes for Shibuya 2022}
\label{table:Shibuya22}

\centering
\resizebox{\columnwidth}{!}{
\begin{tabular}{cccccc}
    \hline
     &  status quo & IP-3 & IP-2 & IP-1 & IP-0
     \\
     \hline
    family rationality & \xmark & \cmark & \cmark & \cmark & \cmark
    \\   
    \hline
    family non-wastefulness & \xmark & \cmark & \cmark & \cmark & \cmark
    \\
     \hline
    %  \# matched children & 669 & 680 & 680 & 693 & 750
    %  \\
    %  \hline
      \# matched children & 669 & 680 & 680 & 693 & 746
     \\
     \hline
     \# blocking coalition & 116 & 0 & 0 & 46 & 599
      \\
    \hline
    %  \# running time (SCIP) & -- & 1 & 1 & 2 & 248
    %  \\
    % \hline
    % \# running time (Gurobi) & --  &  &  &  & 
    %  \\
    %  \hline
     \# running time (s) & -- & 0.19 & 0.18 & 0.42 & 2.64
     \\
    \hline
\end{tabular}
}
\caption{Outcomes for Moriguchi 2021}
\label{table:Moriguchi21}
\end{table}

\end{document}